\renewcommand{\baselinestretch}{1.2}
\newtheorem{prethm}{{\bf Theorem}}
\newenvironment{thm}{\begin{prethm}{\hspace{-0.5
               em}{\bf.}}}{\end{prethm}}
\newtheorem{prepro}[prethm]{Proposition}
\newtheorem{prelem}[prethm]{Lemma}
\newtheorem{precor}[prethm]{Corollary}
\newtheorem{preremark}{{\bf Remark}}
\newenvironment{rem}{\begin{preremark}\em{\hspace{-0.5
              em}{\bf.}}}{\end{preremark}}
\newtheorem{preexample}{{\bf Example}}
\newtheorem{preque}{{\bf Question}}
\newtheorem{preproblem}{{\bf problem}}
\newtheorem{preproof}{{\bf Proof.}}
\newenvironment{proof}[1]{\begin{preproof}{\rm
               #1}\hfill{$\Box$}}{\end{preproof}}
\renewcommand{\thefootnote}
\begin{document}
\title{A Note on Signed $k$-Submatching in Graphs}
\author{S. Akbari$^{\,\rm a,c}$,~ M. Dalirrooyfard$^{\,\rm a,b}$,~ K. Ehsani$^{\,\rm b}$,~ R. Sherkati$^{\,\rm b}$\\
{\footnotesize {\em $^{\rm a}$Department of Mathematical Sciences, Sharif University of Technology,}}\\
{\footnotesize {\em $^{\rm b}$Department of Computer Engineering, Sharif University of Technology,}}\\
{\footnotesize {\em $^{\rm c}$School of Mathematics, Institute for Research in Fundamental Sciences (IPM),}}\\
{\footnotesize {\em P.O. Box 19395-5746, Tehran, Iran}}}
\footnotetext{E-mail Addresses: {\tt s\_akbari@sharif.edu}, {\tt mdalir\_rf@ee.sharif.edu}, {\tt kehsani@ce.sharif.edu}, {\tt sherkati@ce.sharif.edu}}
\date{}
\maketitle


\begin{abstract}

 Let $G$ be a graph of order $n$. For every $v\in V(G)$, let $E_G(v)$ denote the set of all edges incident with $v$. A signed $k$-submatching of $G$ is a function $f:E(G)\longrightarrow \{-1,1\}$, satisfying $f(E_G(v))\leq 1$ for at least $k$ vertices, where $f(S)=\sum_{e\in S}f(e)$, for each $ S\subseteq E(G)$. The maximum of the value of $f(E(G))$, taken over all signed $k$-submatching $f$ of $G$, is called the signed $k$-submatching number and is denoted by $\beta ^k_S(G)$. In this paper, we prove that for every graph $G$ of order $n$ and for any positive integer $k \leq n$, $\beta ^k_S (G) \geq n-k - \omega(G)$, where $w(G)$ is the number of components of $G$. This settles a conjecture proposed by Wang. Also, we present a formula for the computation of $\beta_S^n(G)$.
\vspace{1mm} {\renewcommand{\baselinestretch}{1}
\parskip = 0 mm

\noindent{\small {\it 2010 AMS Subject Classification}: 05C70, 05C78.}}\\
\noindent{\small {\it Keywords}: $k$-submatching, signed $k$-submatching. }


\end{abstract}

\section{Introduction}
 \hspace{0.5cm} Let $G$ be a simple graph with the vertex set $V(G)$ and edge set $E(G)$. For every $v\in V(G)$, let $N(v)$ and $E_G(v)$ denote the set of all
 neighbors of $v$ and the set of all edges incident with $v$, respectively. A {\it signed $k$-submatching} of a graph $G$ is a function $f:E(G)\longrightarrow \{-1,1\}$, satisfying $f(E_G(v))\leq 1$ for at least $k$ vertices, where $f(S)=\sum_{e\in S}f(e)$, for each $ S\subseteq E(G)$. The maximum value of $f(E(G))$, taken over all signed $k$-submatching $f$, is called the {\it signed $k$-submatching number} of $G$ and is denoted by $\beta ^k_S(G)$. We refer to signed $n$-submatching as {\it signed submatching}. The concept of signed matching has been studied by several authors, for instance see \cite{algo}, \cite{edge_ghamesh}, \cite{wang_cover} and \cite{wang_signed}.

Throughout this paper, changing $f(e)$ to $-f(e)$ for an edge $e$ is called {\it switching the value} of $e$.
Let $T$ be a trail with the edges $e_1,\ldots,e_m$ and $f$ be a signed submatching of $G$. Call $T$ a {\it good trail}, if $f(e_i) = - f(e_{i+1})$ for $i=1,\ldots,m - 1$. If $f(e_1)=a$ and $f(e_m)=b$, then we call $T$ a {\it good $(a,b)$-trail}. Define $O_f(G) = \{v\in V(G) \, | \, d(v)\equiv1\mbox{(mod 2)} ,\,\, f(E_G(v))<1 \}$. A vertex is called {\it odd} if its degree is odd.
The following conjecture was proposed in \cite{conj}.

 \noindent {\bf Conjecture.} Let $G$ be a graph without isolated vertices. Then for any positive integer $k$, $$
\beta_S^k(G)\ge n-k-\omega(G),
$$
where $\omega(G)$ denotes the number of components of $G$.

In this note we prove this conjecture. Before stating the proof, we need the following result.

\begin{thm}
\label{connect_conj}
Let $G$ be a connected graph of order $n$. Then for any positive integer $k\leq n$, $\beta^{k}_S(G)\geq n-k-1.$
\end{thm}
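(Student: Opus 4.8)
\noindent\emph{Proof proposal.} The first step is a purely combinatorial reformulation. For a signing $f\colon E(G)\to\{-1,1\}$ put $P=\{e\in E(G):f(e)=1\}$; then $f(E_G(v))\le 1$ holds exactly when $|P\cap E_G(v)|\le\lceil d(v)/2\rceil$, and $f(E(G))=2|P|-|E(G)|$. So Theorem~\ref{connect_conj} asserts: among all $P\subseteq E(G)$ for which at least $k$ vertices obey the cap $|P\cap E_G(v)|\le\lceil d(v)/2\rceil$, the maximum of $2|P|-|E(G)|$ is at least $n-k-1$. I would keep both languages in play, since switching moves read best on $f$ while counting reads best on $P$.

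The heart of the proof is an extremal exchange argument. Take a signed $k$-submatching $f$ with $f(E(G))$ maximum and, among those, one for which the set $B=\{v:f(E_G(v))\le 1\}$ of good vertices is smallest; note $|B|\ge k$, and note that if no $-1$ edge exists then $f\equiv 1$ and $\beta^k_S(G)=|E(G)|\ge n-1$ finishes. Otherwise one records that no cheap improvement is possible: switching a single $-1$ edge $e=uv$ raises $f(E(G))$ by $2$ and raises $f(E_G(u))$ and $f(E_G(v))$ each by $2$, so comparing with maximality (and with the minimality of $|B|$) forces $|B|\le k+1$ and shows that every $-1$ edge has an endpoint in $B$ whose $f$-value is $0$ or $1$. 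The same reasoning, applied to switching \emph{all} edges of a good $(-1,-1)$-trail of odd length — a move that leaves every internal vertex of the trail untouched, raises $f(E(G))$ by $2$, and raises the $f$-value of each of its two ends by $2$ — shows there is no such trail whose two ends both have $f$-value $\le -1$.

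Now suppose for contradiction that $f(E(G))\le n-k-2$. From $2f(E(G))=\sum_v f(E_G(v))$, the fact that each vertex outside $B$ contributes at least $2$ and each vertex of $B$ at most $1$, together with $|B|\le k+1$, a one-line count gives $\sum_{v\in B}f(E_G(v))<0$, so some good vertex $v_0$ has $f(E_G(v_0))\le -1$ and therefore carries a $-1$ edge. Starting at $v_0$ I would grow a maximal good trail with signs alternating $-1,+1,-1,\dots$\,: its first edge is a $-1$ edge at $v_0$; by the structure above, the other end of that edge is a vertex of $B$ with $f$-value in $\{0,1\}$, which therefore carries a $+1$ edge and lets the trail continue; and so on. Using connectedness, the ``full endpoint'' property of $-1$ edges, and the list $O_f(G)$ of deficient odd vertices to control where the trail can stop, I would push it until it cannot be extended, and then argue that the stuck configuration — possibly after some value-preserving switches along even good trails that rearrange deficiency — always exhibits a good $(-1,-1)$-trail of odd length whose two ends both have $f$-value $\le -1$. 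Switching that trail raises $f(E(G))$ by $2$ while leaving $B$ unchanged, contradicting maximality.

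The delicate part is exactly the last paragraph: controlling how the alternating trail terminates. One has to split cases according to the parity of $d(v)$ at the two ends and the parity of the trail's length, handle trails that revisit a vertex or close into a circuit (a complication absent when $G$ is a tree, where every trail is a path), and in the stuck case verify simultaneously that the final switch is legal — the constraint survives at every vertex still needed for the count $k$ — and that it is genuinely improving; this is where the definitions of $O_f(G)$ and of good $(a,b)$-trails should do the real work, while the reformulation, the extremal set-up and the counting are routine by comparison. The only other point worth a separate look is the small-$k$ boundary, where the trivial signing $f\equiv 1$ (value $|E(G)|\ge n-1$, with every leaf good) already beats $n-k-1$ comfortably.
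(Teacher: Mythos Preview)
Your extremal set--up and the counting are sound: the bound $|B|\le k+1$ does follow from maximality of $f(E(G))$ alone (switch any $-1$ edge), the claim that every $-1$ edge has an endpoint in $B$ with $f$-value in $\{0,1\}$ is correct, and the inequality $\sum_{v\in B}f(E_G(v))<0$ under the hypothesis $f(E(G))\le n-k-2$ is a clean one-line deduction. The difficulty is exactly where you flag it, and it is a genuine gap rather than a routine case analysis. Your maximal good trail from $v_0$ need not terminate the way you want: if it has even length it can end at a vertex $u\notin B$ (all $-1$ edges at $u$ already used, $f(E_G(u))\ge 2$), and switching that trail is value-preserving but only \emph{enlarges} $B$, so your secondary extremal choice ``$|B|$ smallest'' gives no contradiction; if the trail closes up at $v_0$ with odd length, switching raises $f(E_G(v_0))$ by $4$, which can push $v_0$ out of $B$ and destroy the $k$-submatching property when $|B|=k$. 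Neither of these is handled by the structure you have recorded, and the phrase ``possibly after some value-preserving switches \dots\ that rearrange deficiency'' is doing all the work without any mechanism supplied. The choice of secondary invariant also looks wrong: for augmenting-trail arguments of this type one typically wants a quantity that \emph{increases} under the value-preserving moves you intend to make (compare the role of $|O_f|$ later in the paper), not $|B|$ minimum.

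The paper's proof takes a completely different and much shorter route that sidesteps all of the trail analysis. It inducts on $|E(G)|-|V(G)|$: the base case $|E|-|V|=-1$ is the tree case, and cycles are handled separately, both quoted from Wang~\cite{conj}. For the inductive step one picks a cycle vertex $v$ of degree $\ge 3$, splits off two of its incident edges onto a new vertex $v'$, and observes that the resulting connected graph $G'$ has $|E|-|V|$ one smaller and $|V(G')|=n+1$; a signed $(k+1)$-submatching of $G'$ pulls back to a signed $k$-submatching of $G$ with the same value, giving $\beta_S^k(G)\ge \beta_S^{k+1}(G')\ge (n+1)-(k+1)-1=n-k-1$. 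The alternating-trail machinery you are reaching for \emph{is} developed in the paper, but only for the exact formula in Theorem~3 (the case $k=n$), and even there it is intricate; for Theorem~\ref{connect_conj} the vertex-splitting reduction is the intended tool.
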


\begin{proof}
{ If $G$ is a cycle, then by Theorem $2$ of \cite{conj} the assertion is obvious. Thus assume that $G$ is not a cycle. Now, we apply induction on $ |E(G)| - |V(G)|$. Since $ G $ is connected, $ |E(G)| - |V(G)| \ge -1$. If $ |E(G)| - |V(G)| = -1$, then $ G $ is a tree and so by Theorem $6 $ of \cite{conj}, we are done. Now, suppose that the assertion holds for every graph $H$ with $ |E(H)| - |V(H)| \leq t \, (t\geq -1)$ and $ G $ be a connected graph such that $ |E(G)| - |V(G)| = t + 1$. Since $ |E(G)|- |V(G)| \ge 0$, $ G $ contains a cycle $ C$ and there exists a vertex $v$ such that $ v \in V(C) $ and $ d(v) \ge 3$. Assume that $ u,w \in N(v) \cap V(C)$. Let $ x \in N(v)\backslash \{ u,w \}$. Remove two edges $ vw $ and $xv$ and add a new vertex $ v'$. Join $ v' $ to both $ x$ and $w$. Call the new graph by $ G'$. Clearly, $G'$ is connected and $|E(G')| - |V(G')| = t$. By the induction hypothesis, $\beta^{k+1}_S(G')\geq |V(G')|-k-2=n-k-1$. We claim that $\beta^{k}_S(G)\geq \beta^{k+1}_S(G')$. Let $f$ be a signed $(k+1)$-submatching of $G'$ such that $f(E(G'))=\beta^{k+1}_S(G')$. Define a function $g$ on $E(G)$ as follows:

 For every $e\in E(G)\setminus \{vx, vw\}$, let $g(e)=f(e)$. Moreover, define $g(xv)=f(xv')$ and $g(vw)=f(v'w)$.
 It is not hard to see that $g$ is a $k$-submatching of $G$. So $\beta^{k}_S(G)\geq g(E(G))=\beta^{k+1}_S(G')$. Thus $\beta^{k}_S(G)\geq n-k-1$, and the claim is proved. The proof is complete.}
\end{proof}

Now, using the previous theorem we show that the conjecture holds.

\begin{thm}
Let $G$ be a graph of order $n$ without isolated vertices. Then for any positive integer $k\le n$,
$$
\beta_S^k(G)\ge n-k-\omega(G),
$$
where $\omega(G)$ denotes the number of components of $G$.
\end{thm}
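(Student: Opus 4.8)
The plan is to reduce the general case to the connected case already established in Theorem~\ref{connect_conj}. Write $G$ as a disjoint union of its components $G_1,\dots,G_\omega$, where $\omega=\omega(G)$, and let $n_i=|V(G_i)|$, so $\sum_{i=1}^\omega n_i=n$. Since $G$ has no isolated vertices, each $G_i$ has at least one edge. The idea is to distribute the parameter $k$ among the components: we want to choose integers $k_i\ge 0$ with $\sum_{i=1}^\omega k_i=k$ and $k_i\le n_i$ for each $i$, apply Theorem~\ref{connect_conj} (or a trivial bound when $k_i=0$) to each $G_i$ separately, and then glue the resulting signed submatchings together, since edges in different components never share a vertex.

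The key steps, in order, are as follows. First I would show that a distribution $k_1,\dots,k_\omega$ with $0\le k_i\le n_i$ and $\sum k_i=k$ exists: this is immediate because $k\le n=\sum n_i$, so one can greedily fill up the $k_i$'s. Second, for each component $G_i$: if $k_i\ge 1$, Theorem~\ref{connect_conj} gives a signed $k_i$-submatching $f_i$ of $G_i$ with $f_i(E(G_i))\ge n_i-k_i-1$; if $k_i=0$, take $f_i\equiv 1$ on all of $E(G_i)$, which vacuously satisfies the ``at least $0$ vertices'' constraint and gives $f_i(E(G_i))=|E(G_i)|\ge 1\ge n_i-k_i-1$ whenever $n_i\le 2$, and more generally I would just note that any connected graph trivially has a signed $0$-submatching of value at least $n_i-1$ (e.g.\ again by Theorem~\ref{connect_conj} with $k_i=1$, whose submatching constraint on $\ge 1$ vertex is weaker, or directly via a spanning-tree argument). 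Third, define $f:E(G)\to\{-1,1\}$ by $f|_{E(G_i)}=f_i$. Since $E_G(v)\subseteq E(G_i)$ for $v\in V(G_i)$, the vertex $v$ satisfies $f(E_G(v))\le 1$ iff $f_i(E_{G_i}(v))\le 1$, so $f$ satisfies the submatching inequality at $\ge\sum k_i=k$ vertices, i.e.\ $f$ is a signed $k$-submatching of $G$. Finally, summing, $\beta_S^k(G)\ge f(E(G))=\sum_{i=1}^\omega f_i(E(G_i))\ge\sum_{i=1}^\omega(n_i-k_i-1)=n-k-\omega$.

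The only real subtlety — the ``main obstacle'' — is handling components that receive $k_i=0$, since Theorem~\ref{connect_conj} is stated for positive $k$. The clean fix is to observe that whenever $k_i=0$ we may as well bump it: if some $k_j<n_j$ and some other $k_i=0$, we could instead redistribute so that every component gets $k_i\ge 1$, which is possible precisely because $\omega\le n$ (each component has $n_i\ge 1$) and $k\le n$ — but $k$ could be smaller than $\omega$, so this is not always achievable. Hence the robust approach is simply to prove the auxiliary fact that for a connected graph $H$ of order $m$ with at least one edge, $\beta_S^0(H)\ge m-1$ (indeed $\beta_S^0(H)=|E(H)|\ge m-1$, attained by $f\equiv 1$), and use this for the zero-weight components. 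With that lemma in hand the gluing argument goes through verbatim and the bound $n-k-\omega$ follows. I would present the proof in exactly this order: distribute $k$, invoke Theorem~\ref{connect_conj} or the $f\equiv 1$ bound per component, glue, and sum.
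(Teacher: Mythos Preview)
Your proof is correct and follows the same overall strategy as the paper: decompose $G$ into its components $G_1,\dots,G_\omega$, apply Theorem~\ref{connect_conj} to each component with a suitable parameter $k_i$, and sum the resulting inequalities.

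The execution differs in a way worth noting. The paper works \emph{backwards} from an optimal signed $k$-submatching $f$ of $G$: it lets the $k_i$'s be the number of satisfied vertices that $f$ happens to place in each $G_i$, and then argues by contradiction that $f|_{E(G_i)}$ must itself be optimal for $G_i$ with parameter $k_i$, so that $\beta_S^k(G)=\sum_i\beta_S^{k_i}(G_i)$. You instead work \emph{forwards}: choose any distribution $k_1,\dots,k_\omega$ with $0\le k_i\le n_i$ and $\sum k_i=k$, build good submatchings on each component, and glue. Your route is more elementary --- it avoids the optimality-of-the-restriction argument entirely --- and it also makes explicit the case $k_i=0$ (handled cleanly by $f_i\equiv 1$, giving $|E(G_i)|\ge n_i-1$), a boundary case the paper's invocation of Theorem~\ref{connect_conj} passes over. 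The paper's approach, on the other hand, yields the slightly stronger intermediate fact that the optimum over $G$ actually equals a sum of optima over the components for some distribution of the $k_i$'s.
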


\begin{proof}
{
For the abbreviation let $\omega=\omega(G)$. If $\omega = 1$, then by Theorem \ref{connect_conj} the assertion holds. Now, suppose that $\omega>1$ and $G_1,\ldots,G_\omega$ are all components of $G$. Let $f:E(G)\longrightarrow \{-1,1\}$, be a signed $k$-submatching function such that $f(E(G))=\beta_S^k(G)$. Suppose that $A\subset \{v\in V(G)\,|\,f(E_G(v))\le1\}$ and $|A|=k$. Let $k_i = |\{v\in V(G_i)\cap A\,|\,f(E_G(v))\le1\}|$, for $i=1,\ldots,\omega$. Obviously, $\sum_{i=1}^\omega k_i = k$. By Theorem \ref{connect_conj}, $\beta^{k_i}_S(G_i)\ge |V(G_i)| - k_i-1$, for $i=1,\ldots,\omega$. Now, we show that $\beta_S^{k_i}(G_i)=f(E(G_i))$. By contradiction, suppose that $\beta_S^{k_i}(G_i)>f(E(G_i))$, for some $i$, $i=1,\ldots,\omega$. Let $g:E(G)\longrightarrow \{-1,+1\}$ be a function such that $g(e)=f(e)$, for every $e\in E(G)\setminus E(G_i)$ and the restriction of $g$ on $E(G_i)$ is a signed $k_i$-submatching with $g(E(G_i))=\beta_S^{k_i}(G_i)$. So we conclude that $g(E(G))>\beta_S^k(G)$, a contradiction. Thus $\beta_{S}^k(G)=f(E(G))=\sum_{i=1}^{\omega}f(E(G_i))=\sum_{i=1}^\omega\beta^{k_i}_S(G_i)\ge \sum_{i=1}^{\omega}(|V(G_i)| - k_i-1)=|V(G)|-k-\omega$.
}
\end{proof}

Now, suppose that $G$ is a connected graph containing exactly $2k$ odd vertices. 
Let $P$ be a partition of the edge set into $m$ trails, say $T_1, \ldots, T_m$, for some $m$. Call $P$ a {\it complete partition} if $m=k$. By Theorem 1.2.33 of \cite{west}, for every connected graph with $2k$ odd vertices there exists at least one complete partition. Note that for every odd vertex $v\in V(G)$, there exists $i$ such that $v$ is an end point of $T_i$, where $P: T_1,\ldots, T_k$ is a complete partition of $G$. So we obtain that the end vertices of $T_i$ are odd and they are mutually disjoint, for $i=1,\ldots, k$.
Now, define $\tau(P)=|\{i\,\,|\,\,|E(T_i)|\equiv1(\mbox{mod 2})\}|$. Let $\eta(G)=\max{\tau(P)}$, taken over all complete partitions of $G$. In the next theorem we provide an explicit formula for the signed $n$-submatching number of a graph.

\begin{thm}
For every non-Eulerian connected graph $G$ of order $n$, $\beta_S^n(G)=\eta(G)$.
\end{thm}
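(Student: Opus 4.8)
The plan is to prove the two inequalities $\beta_S^n(G)\le \eta(G)$ and $\beta_S^n(G)\ge \eta(G)$ separately. For the lower bound I would start from a complete partition $P=\{T_1,\dots,T_k\}$ with $\tau(P)=\eta(G)$ (such a $P$ exists because $G$ is connected and, being non-Eulerian, has $2k$ odd vertices with $k\ge 1$). On each trail assign values so that consecutive edges receive opposite values; in addition, on each odd-length trail fix the pattern to begin and end with $+1$, so that its edges sum to $1$, while every even-length trail contributes $0$. Let $f$ be the resulting function. The key point is that in a complete partition every vertex is an end vertex of at most one $T_i$ (none if even, exactly one if odd, since the $2k$ end vertices of the trails are precisely the odd vertices and are mutually distinct). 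Hence at a vertex $v$, a trail $T_i$ through $v$ contributes $0$ to $f(E_G(v))$ when $v$ is internal to $T_i$ (the edges of $T_i$ at $v$ fall into consecutive pairs of opposite value) and contributes $\pm1$ when $v$ is an end vertex of $T_i$. Therefore $f(E_G(v))\in\{0,\pm1\}$ for every $v$, so $f$ is a signed $n$-submatching, and $f(E(G))=\sum_i f(E(T_i))=\tau(P)=\eta(G)$.

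For the upper bound let $f$ be an optimal signed $n$-submatching and let $d^+(v),d^-(v)$ be the numbers of edges at $v$ with $f$-value $+1$, resp.\ $-1$; validity says $d^+(v)\le d^-(v)$ at even $v$ and $d^+(v)\le d^-(v)+1$ at odd $v$. First I would decompose $E(G)$ into good trails: at each $v$ pair up as many oppositely-valued incident edges as possible, read off the resulting open trails and closed trails, and then absorb each closed good trail into an open one at a shared vertex (possible since $G$ is connected and, being non-Eulerian, has at least one odd vertex and hence at least one open good trail; the rotation freedom of a closed good trail keeps the merged trail good). This yields $N:=\tfrac12\sum_v|d^+(v)-d^-(v)|$ good trails, and every good trail $T$ satisfies $f(E(T))\in\{-1,0,1\}$, with $f(E(T))=1$ only when $|E(T)|$ is odd. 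Since each odd vertex contributes at least $1$ to $\sum_v|d^+(v)-d^-(v)|$, we get $N\ge k$, the excess $N-k$ being concentrated at the ``surplus'' vertices $v$ with $f(E_G(v))\le -2$, where $d^-(v)-d^+(v)\ge 2$.

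The heart of the argument is to reshape this good-trail family into a complete partition without ever producing a trail $T$ with $f(E(T))\ge 2$. At a surplus vertex $v$ all trail ends at $v$ carry a $-1$ edge, so every trail ending there has $f$-sum $\le 0$; merging two such trails at $v$ (concatenating one with the reverse of the other) gives a trail of $f$-sum $\le 0$ and reduces the number of ends at $v$ by two, and this can be repeated until $v$ has the correct number of ends ($0$ if $v$ is even, $1$ if $v$ is odd). In the degenerate situation where all ends at $v$ belong to a single trail, that trail is necessarily a $v$--$v$ trail of $f$-sum $\le 0$, which we instead close up and absorb into another open trail (one exists, since otherwise $G$ would be Eulerian), again keeping all $f$-sums $\le 1$. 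After processing every surplus vertex, exactly $k$ trails remain and each vertex is incident to the right number of ends, i.e.\ we have a complete partition $P$ in which every trail $T$ has $f(E(T))\le 1$, hence $f(E(T))\le 1$ when $|E(T)|$ is odd and $f(E(T))\le 0$ otherwise. Summing, $\beta_S^n(G)=f(E(G))=\sum_{T\in P}f(E(T))\le\tau(P)\le\eta(G)$, which together with the lower bound gives $\beta_S^n(G)=\eta(G)$.

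I expect the main obstacle to be this reshaping step: one must verify that at each surplus vertex there are always two mergeable trail ends (with the single-trail degenerate case needing the separate close-up move), that the merges only ever combine a trail of $f$-sum $\le 0$ with a trail of $f$-sum $\le 1$, and that the bookkeeping of end counts turns exactly $N$ trails into $k$ (this last being routine once one writes the excess $N-k$ as a sum over the surplus vertices of $\lfloor\tfrac12(d^-(v)-d^+(v))\rfloor$, each term at least $1$). A viable alternative, if one prefers a cleaner but less elementary route, is to first replace $f$ by an optimal signed $n$-submatching with $N=k$ via switching along good trails, and then simply note that the good-trail decomposition is already a complete partition; however, that canonicalization has its own delicate case when a surplus even vertex admits no good trail ``escaping'' it, so I would default to the direct merging argument above.
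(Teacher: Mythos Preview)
Your lower-bound construction (alternating $\pm 1$ along each trail of a $\tau$-maximizing complete partition) is exactly what the paper does. For the upper bound, however, you take a genuinely different route. The paper first \emph{canonicalizes} the optimal $f$: choosing $f$ with $f(E(G))=\beta_S^n(G)$ and, among those, with $|O_f|$ maximal, it proves via a fairly lengthy case analysis on maximal good trails that $f(E_G(v))\ge -1$ for every $v$ -- in your language, there are no surplus vertices at all. Only after that does it build a complete partition, by repeatedly stripping off a longest good trail (showing its two ends are distinct odd vertices and carry all of their incident edges) and inducting on the remaining graph, which it shows has no Eulerian component. Your approach instead tolerates surplus vertices: you pair oppositely-signed edges at each vertex to get $N$ open good trails and then merge trail ends at surplus vertices down to a complete partition, tracking only that each surviving trail has $f$-sum $\le 1$.

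Both arguments have a delicate spot. In the paper it is the canonicalization step (about half the proof). In yours it is the merging: the degenerate $v$--$v$ case, and the invariant that any trail still ending at an unprocessed surplus vertex has $f$-sum $\le 0$ (so merges never push a sum above $1$). These hold for the reasons you sketch -- trail ends at surplus vertices always carry $-1$ edges, parity forces the single-trail degenerate case to occur only at even surplus vertices, and connectivity plus non-Eulerianness guarantees an open trail to absorb any closed one -- but they deserve explicit statement. What the paper's route buys is a complete partition consisting entirely of good trails (a slightly stronger structural conclusion); what your route buys is avoidance of the auxiliary $|O_f|$ maximization and its case analysis. You even flag the paper's canonicalization approach as the natural alternative in your final paragraph.
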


\begin{proof}
{

For the simplicity, let $O_f=O_f(G)$. Let $f$ be a signed submatching such that $|O_f|=max(|O_g|)$ taken over all signed submatching $g$ with $g(E(G))=\beta_S^n(G)$. We prove that $f(E_G(v))\ge -1$, for every $v\in V(G)$.

By contradiction suppose that there is a vertex $v\in V(G)$ such that $f(E_G(v))\le -2$. Let $W$ be a longest good $(-1,\pm1)$-trail starting at $v$. Suppose that $W$ ends at $u$. There are two cases: 

{\bf Case 1.} Assume that $u\neq v$. If $W$ is a good $(-1,-1)$-trail, then $f(E_G(u))\le -1$, since otherwise there exists $e\in E_G(u)\setminus E(W)$ such that $f(e)=1$, therefore $W$ can be extended and it contradicts the maximality of $|E(W)|$. Now, switch the values of all edges of $W$ to obtain a function $g$ on $E(G)$, where $g(E_G(x))=f(E_G(x))$ for every $x\in V(G)\setminus \{u,v\}$, and $g(E_G(x))=f(E_G(x))+2$ for $x\in \{u,v\}$. Thus $g$ is a signed submatching of $G$ such that $g(E(G))= \beta^n_S(G) + 2$, a contradiction.\\
 If $W$ is a good $(-1,1)$-trail, then $f(E_G(u))=1$, since otherwise there exists $e\in E_G(u)\setminus E(W)$, where $f(e)=-1$, a contradiction. Now, switch the values of all the edges of $W$ to obtain a function $g$ on $E(G)$, where $g(E_G(x))=f(E_G(x))$ for $x\in V(G)\setminus \{u,v\}$, $g(E_G(u))=-1$ and $g(E_G(v))< 1$. So $g$ is a signed submatching of $G$ such that $g(E(G))=\beta_S^n(G)$ and $|O_g|=|O_f|+1$, a contradiction. 

{\bf Case 2.} Now, let $u=v$. Note that $W$ is a good $(-1,-1)$-trail, since otherwise $\sum_{e \in E(W) \cap N_G(v)} f(e) = 0$ and using the inequality $f(E_G(v))\le -2$, we conclude that there exists $e\in E_G(v)\setminus E(W)$, such that $f(e)=-1$. Therefore $W$ can be extended, a contradiction.

If $f(E_G(v))\le -3$, then switch the values of all edges of $W$ to obtain a signed submatching $g$ such that $g(E(G))= \beta^n_S(G) + 2$, a contradiction. Now, assume that $f(E_G(v))=-2$. We show that $f(E_G(t)) = 0$, for every $t \in V(W)\setminus \{v\}$. By contradiction, suppose that there exists $x \in V(W)\setminus \{v\}$, such that $f(E_G(x)) \neq 0$. Let $e_1,\ldots, e_m$ be all edges of $W$. Assume that $e_i$ and $e_{i+1}$ are two consecutive edges of $W$ which are incident with $x$. With no loss of generality, assume that $f(e_i)=-1$. First, suppose that $f(E_G(x))\le -1$. Call the sub-trail induced on the edges $e_1,e_2,\ldots, e_i$ by $W_1$. Clearly, $W_1$ is a good $(-1,-1)$-trail. Switch the values of all edges of $W_1$ to obtain a signed submatching $g$ such that $g(E(G))= \beta^n_S(G) + 2$, a contradiction. 
Next, suppose that $f(E_G(x))=1$. Call the sub-trail induced on the edges $e_{i+1},\ldots, e_m$ by $W_2$. Clearly, $W_2$ is a good $(1,-1)$-trail. Switch the values of all edges of $W_2$ to obtain a signed submatching $g$ such that $g(E_G(x))=-1$, $g(E_G(v))=0$ and $g(E_G(z))=f(E_G(z))$, for every $z\in E(G) \setminus \{x,v\}$. So $g(E(G))=\beta_S^n(G)$ and $|O_g|=|O_f|+1$, a contradiction. Thus, $f(E_G(t)) = 0$, for every $t \in V(W)\setminus \{v\}$.

Now, we show that $E_G(v)\subseteq E(W)$. By contradiction assume that there exists $e\in E_G(v)\setminus E(W)$. If $f(e)=1$, then $W$ can be extended, a contradiction. If $f(e)=-1$, then $f(E_G(v))\le -3$ which contradicts $f(E_G(v))= -2$. Thus $E_G(v)\subseteq E(W)$.
Since $G$ is non-Eulerian, there are $x\in V(W)\setminus \{v\}$ and $y\in V(G)$ such that $xy\notin E(W)$. Let $W'$ be a longest good trail in $G\setminus E(W)$ whose first vertex and first edge are $x$ and $xy$, respectively. Suppose that $W'$ ends at $y'$ and the last edge of $W'$ is $e$. We have two possibilities:

If $y'=x$, then we show that $W'$ is a good $(1,-1)$ or $(-1,1)$-trail. To see this, since $f(E_{G}(x))=0$, we obtain that $f(E_G(x)\setminus E(W))=0$. If $f(e)=f(xy)$, then there exists $e'\in E_G(x)\setminus (E(W)\cup E(W'))$ such that $f(e')=-f(xy)$. So $W'$ can be extended, a contradiction. Thus $f(e) \neq f(xy)$. It is not hard to see that the trail with the edges $E(W)\cup E(W')$ is a good $(-1,-1)$-trail starting at $v$, a contradiction.

Now, suppose that $y'\neq x$. Assume that $x$ is the common end point of $e_j$ and $e_{j+1}$, for some $j,$ $1\leq j\leq m-1$. With no loss of generality assume that $f(e_j)=-f(xy)$. Consider the trail $W'' : e_1,\ldots, e_j, W'$.  Since $E_G(v)\subseteq E(W)$, $y' \neq v$. If $y' \in V(W)$, then $f(E_G(y')) = 0$ and $\sum_{z\in (E_G(W)\cup E_G(W'))\cap E_G(y')} f(z) = f(e)$. Hence there exists $e' \in E_G(y')\setminus (E_G(W) \cup E_G(W'))$ such that $f(e') = -f(e)$, which contradicts the maximality of $|E(W')|$. Thus $y' \notin V(W)$ and so $W''$ is a maximal good trail in $G$. So we reach to the Case $1$ which we discussed before (Note that in the Case $1$ we used just the maximality of the length of $W$). So we proved that $f(E_G(z))\ge -1$, for every $z\in V(G)$. 

In the sequel assume that $G$ has exactly $2k$ odd vertices. We would like to partition $G$ into $k$ good trails.

Let $T:e_1,\ldots,e_m$ be a longest good trail in $G$. Suppose that $T$ starts at $u_1$ and ends at $u_2$, where $u_1,u_2\in V(G)$. First, we show that $u_1 \neq u_2$. By contradiction assume that $u_1=u_2$. Suppose that $f(e_1) \neq f(e_m)$. Since $G$ is non-Eulerian, there exists $e\in E(G) \setminus E(T)$ and $e$ is adjacent to the common end point of $e_i$ and $e_{i+1}$ for some $i,\, i=1,\ldots,m$ ($e_{m+1}=e_1$). With no loss of generality assume that $f(e) \neq f(e_i)$, so $T':e, e_i, e_{i-1},\ldots,e_1,e_m,\ldots,e_{i+1}$ is a good trail with $m+1$ edges, a contradiction. Now, suppose that $f(e_1)=f(e_m)$. Since $\sum_{z\in E_G(u_1)\cap E(T)} f(z)=2f(e_1)$ and $|f(E_G(u_1))|\leq 1$, we obtain that there exists $a\in E_G(u_1)\setminus E(T)$ such that $f(a)\neq f(e_1)$. So $T$ can be extended, a contradiction.\\   
Hence $u_1\neq u_2$. Since $f(E_G(v))=0$, for every $v\in V(G)$ of even degree, we obtain that $u_1$ and $u_2$ have odd degrees. Indeed, if $u_1$ has even degree, then $f(E_G(u_1))=0$ and so $T$ can be extended, a contradiction. Now, we show that $E_G(u_1)\cup E_G(u_2) \subseteq E(T)$. By contradiction, suppose that there is an edge $e\in E_G(u_1)\setminus E(T)$. Clearly, $f(e)=f(e_1)$. Since $\sum_{a\in E_G(u_1)\cap E(T)}f(a) =f(e_1)$, it is not hard to see that $|f(E_G(u_1))|\ge 2$, a contradiction. Hence $E_G(u_1)\cup E_G(u_2) \subseteq E(T)$. 

Let $G'=G\setminus (E(T)\cup \{u_1,u_2\})$. First, we prove that $G'$ has no Eulerian component. By contradiction suppose that $H$ is an Eulerian component of $G'$. Since $|f(E_G(v))|\le 1$, for every $v\in V(G)$, we have $f(E_G(v))=0$, for every $v\in V(H)$. It is straight forward to see that there is an Eulerian circuit $C:t_1,t_2,\ldots,t_{|E(H)|}$ of $H$ such that $f(t_i)=-f(t_{i+1})$, for $i=1,\ldots , |E(H)|-1$. Clearly, $|E(C)|\equiv \sum_{e\in E(C)}f(e) \equiv \frac{\sum_{v\in V(C)}f(E_H(v))}{2}\equiv 0\,(\mbox{mod }\,2)$. Hence, $f(t_1)=-f(t_{|E(H)|})$. Since $G$ is connected and all of the edges of $u_1$ and $u_2$ belong to $E(T)$, there exists $v\in V(H) \cap V(T)$. It is not hard to see that we have a good trail with the edge set $E(T)\cup E(C)$ which is longer than $T$, a contradiction. So if $k=2$, then $E(G')= \emptyset$, and $E(G)$ forms a good trail. Now, apply induction on $k$. Suppose that $k>2$. Let $H_1, \ldots, H_r$ be all components of $G'$, where $H_i$ has $k_i$ odd vertices ($k_i\ge 2$), for $i=1,\ldots , r$. It is clear that $f$ is a signed submatching of $H_i$ such that $f(E(H_i))=\beta^{|V(H_i)|}_S(H_i)$ and $O_f(H_i)=\max O_g(H_i)$ taken over all signed submatching $g$ with $g(E(H_i))=\beta^{|V(H_i)|}_S(H_i)$. So $E(H_i)$ can be decomposed into $k_i$ good trails, for $i=1,\ldots, r$. Hence, $G$ has a complete partition, say $P$, into $k$ good trails. Obviously, $f(E(G))\leq \tau(P) \leq \eta(G)$. Thus, $\beta^n_S(G)\leq \eta(G)$.
Now, we give a signed submatching $f$ such that $f(E(G))=\eta(G)$.

Consider a complete partition $P$ of the edge set of $G$, where $\tau(P)=\eta(G)$. For each trail $T_i$ assign $+1$ and $-1$ to the edges of $T_i$, alternatively, to obtain a signed submatching $f$ where $f(E(G))=\eta(G)$. So the proof is complete.}
\end{proof}

\begin{rem}
For every Eulerian graph $G$ of size $m$, $ \beta^n_S(G)=0$ if $m$ is even and $\beta^n_S(G)=-1$ if $m$ is odd. To see this, let $f$ be a signed submatching of $G$ such that $f(E(G))=\beta_{S}^n(G)$. Since  the degree of each vertex of $G$ is even, $f(E_G(v))\le 0$, for every $v\in V(G)$. Thus $f(E(G)) = \frac{1}{2}\sum_{v\in V(G)} f(E_G(v))\le 0$. Therefore, $ \beta^n_S(G)\le 0$, if $m$ is even and $\beta^n_S(G)\le -1$, if $m$ is odd.
Now, consider an Eulerian circuit of $G$. Assign $-1$ and $+1$ to the edges of this Eulerian circuit, alternatively to obtain a signed submatching $g$ with the desired property.
\end{rem}


\begin{thebibliography}{MM}
\bibitem{algo} R.P. Anstee, A polynomial algorithm for $b$-matchings an alternative approach, Information Processing Letters, 24.3 (1987): 153-157.
\bibitem{edge_ghamesh} A.N. Ghameshlou, A. Khodkar, R. Saei, S.M. Sheikholeslami, Signed $(b,k)$-Edge Covers in Graphs, Intelligent Information Management, 2 (2010) 143-148.
 \bibitem{conj} C. Wang, The signed $k$-submatching in graphs, Graphs and Combinatorics, 29 (2013) 1961-1971.
\bibitem{wang_cover} C. Wang, Signed $b$-matchings and $b$-edge covers of strong product graphs, Contributions to Discrete Mathematics, 5 (2010) 1-10.
\bibitem{wang_signed} C. Wang, The signed matchings in graphs, Discussiones Mathematicae Graph Theory, 28 (2008) 477-486.
\bibitem{west} D.B. West, Introduction To Graph Theory, $2^{nd}$ Edition, Prentice Hall of India (2007).
 \end{thebibliography}
\end{document}